\newtheorem{myDef}{Definition}
\newtheorem{myTheo}{Theorem}
\newtheorem{mylem}{Lemma}
\newtheorem{proof}{Proof}
\title{Prove Symbolic Regression is NP-hard by Symbol Graph}
\date{} 					
\author{ \href{https://orcid.org/0009-0008-0660-7694}{\includegraphics[scale=0.06]{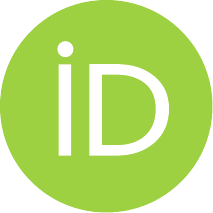}\hspace{1mm}Jinglu Song} \\
	Beijing Key Laboratory of Petroleum Data Mining\\
	China University of Petroleum\\
	Beijing, China \\
	\texttt{jinglusong@hotmail.com} \\
	\And
	\href{https://orcid.org/0000-0001-8217-2305}{\includegraphics[scale=0.06]{orcid.pdf}\hspace{1mm}Qiang Lu\thanks{Corresponding author.}} \\
	Beijing Key Laboratory of Petroleum Data Mining\\
	China University of Petroleum\\
	Beijing, China \\
	\texttt{luqiang@cup.edu.cn} \\
\And
 \href{https://orcid.org/0000-0001-7234-7220}{\includegraphics[scale=0.06]{orcid.pdf}\hspace{1mm}Jingwen Zhang} \\
	Beijing Key Laboratory of Petroleum Data Mining\\
	China University of Petroleum\\
	Beijing, China \\
	\texttt{jingwen.zhang@student.cup.edu.cn} \\
 \And
  \href{https://orcid.org/0009-0009-9694-3313}{\includegraphics[scale=0.06]{orcid.pdf}\hspace{1mm}Bozhou Tian} \\
	Beijing Key Laboratory of Petroleum Data Mining\\
	China University of Petroleum\\
	Beijing, China \\
	\texttt{bozhoutian@gmail.com} \\
 \And
  \href{https://orcid.org/0000-0002-3900-643X}{\includegraphics[scale=0.06]{orcid.pdf}\hspace{1mm}Jake Luo} \\
	Department of Health Informatics and Administration\\
	University of Wisconsin Milwaukee\\
	Milwaukee, United States \\
	\texttt{jakeluo@uwm.edu} \\
 \And
  \href{https://orcid.org/0000-0002-1325-5067}{\includegraphics[scale=0.06]{orcid.pdf}\hspace{1mm}Zhiguang Wang} \\
	Beijing Key Laboratory of Petroleum Data Mining\\
	China University of Petroleum\\
	Beijing, China \\
	\texttt{cwangzg@cup.edu.cn} \\
 }
\begin{document}
\maketitle

\begin{abstract} 
Symbolic regression (SR) is the task of discovering a symbolic expression that fits a given data set from the space of mathematical expressions.  Despite the abundance of research surrounding the SR problem, there's a scarcity of works that confirm its NP-hard nature. Therefore, this paper introduces the concept of a symbol graph as a comprehensive representation of the entire mathematical expression space, effectively illustrating the NP-hard characteristics of the SR problem. Leveraging the symbol graph, we establish a connection between the SR problem and the task of identifying an optimally fitted degree-constrained Steiner Arborescence (DCSAP). The complexity of DCSAP, which is proven to be NP-hard, directly implies the NP-hard nature of the SR problem.

\keywords{Symbolic Regression \and NP-hard \and Degree-Constrained Steiner Arborescence Problem}
\end{abstract}

\section{Introduction}  
Researchers deduce formulas that describe the data law according to their knowledge. For example, Johannes Kepler discovered that Mars’ orbit was an ellipse after he tried to fit the Mars data into various ovoid shapes \citep{udrescu2020ai}. Like the process by which researchers deduce formulas, symbolic regression (SR) tries to discover symbolic expressions fitted by the given data set from the mathematical expression space $\Omega$ \citep{schmidt2009distilling}. To our knowledge, SR has been used in many fields, such as biology \citep{schmidt2011automated}, climate modeling \citep{stanislawska2012modeling}, materials science \citep{wang2019symbolic}, etc. Thus, many different algorithms have been proposed to address the SR problem \citep{he2022taylor,lu2016using,lu2021incorporating}.

Although many kinds of research on the SR problem, there are few works to show that the SR problem is NP-hard \citep{lu2016using,petersendeep,udrescu2020ai,virgolin2022symbolic}. \citeauthor{lu2016using} state that the SR problem is NP-hard but provides no proof. \citeauthor{virgolin2022symbolic} prove it by showing how the decision version of the unbounded subset sum problem can be reduced to a decision version of the SR problem. Although they restrict the mathematical expression space to contain only linear sums of features in the data set, i.e., functions of the form $f(x)=c+\sum_{j=1}^d{x_jm_j}$ with $m_j \in N^+$ and $c\in R$, such as "$x+1$" and "$3x+5$". The symbolic expression is so simple that it cannot represent a complex expression, such as "$sin(x^2+2x^3)$", "$sin(cos(x^2))$", and "$\sqrt{e^x}$". Thus, we need a more perfect method to prove that the SR problem is NP-hard.

In order to overcome the above problem, we introduce the concept of a symbol graph (as illustrated in Figure~\ref{fig:steinergp}). This symbol graph represents the entire mathematical expression space $\Omega$, and we establish a connection between the SR problem and the task of identifying an optimally fitted degree-constrained Steiner Arborescence (DCSAP) \citep{chung1998algorithms,guo2022approximating} within this graph. The complexity of finding a DCSAP, which is proven to be NP-hard in Lemma~\ref{dcsmtnpc}, directly implies the NP-hard nature of the SR problem.
The proposed proof is uniquely relevant to the real-world SR problem as it accounts for more complex expressions within our newly proposed symbol graph.
\begin{figure}[htp]
\centering
\includegraphics[scale=0.55]{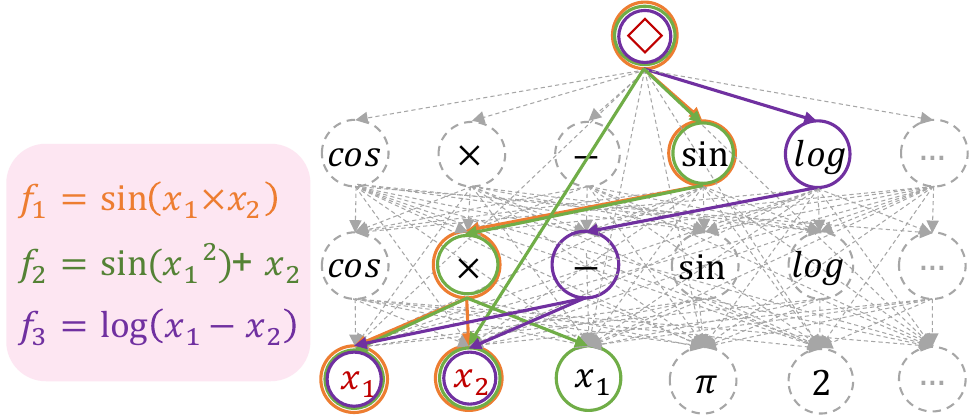}
\caption{Symbol Graph.}
\label{fig:steinergp}
\end{figure}
\section{Related Work}
In addressing the challenges of symbolic regression (SR), algorithms must navigate a vast mathematical expression space \citep{korns2013baseline}. While various approximate methods are employed to find results within this space, there is often a lack of explanation as to why approximate rather than deterministic methods are used. This typically stems from the perception that the SR problem is NP-hard \citep{lu2016using,virgolin2022symbolic,petersendeep,udrescu2020ai}, a notion that, until recently, had not been rigorously proven.

While \citeauthor{lu2016using} categorize the SR problem as NP-hard, they do not provide a formal proof. In contrast, \citeauthor{virgolin2022symbolic} offer a more concrete foundation for this classification by drawing parallels between the SR problem and the unbounded subset sum problem, effectively reducing the latter to an instance of the former. This approach provides a more solid theoretical basis for the use of approximate methods in solving SR problems, as it aligns the complexity of SR with the well-established NP-hard nature of the unbounded subset sum problem. The approach by~\citeauthor{virgolin2022symbolic} in proving the NP-hardness of the symbolic regression (SR) problem, though significant, comes with a notable limitation. Their method restricts the mathematical expression space only to include linear sums of features from the dataset. Specifically, they consider functions of the form $f(x)=c+\sum_{j=1}^d{x_jm_j}$ with $m_j \in N^+$ and $c\in R$, such as "$x+1$" and "$3x+5$". 

However, this constraint greatly simplifies the types of mathematical expressions that can be represented, failing to encompass a wider and more complex range of expressions frequently encountered in SR problems. For instance, expressions like "$sin(x^2+2x^3)$", "$sin(cos(x^2))$", and "$\sqrt{e^x}$" are examples of more intricate functions that are not covered under their defined expression space. These types of expressions are crucial in many SR applications, as they can represent a broader spectrum of real-world phenomena and mathematical relationships.
    
In contrast to the approach taken in~\cite{virgolin2022symbolic}, this paper introduces the concept of a symbol graph as a comprehensive representation of the entire mathematical expression space. Building upon this, we offer a robust proof demonstrating that the symbolic regression (SR) problem is equivalent to the degree-constrained Steiner Arborescence problem (DCSAP) \citep{chung1998algorithms,guo2022approximating}. This equivalence is significant as DCSAP is proven to be NP-hard in Lemma~\ref{dcsmtnpc}. Therefore, by establishing this, we conclusively prove that the SR problem is also NP-hard. This not only broadens our understanding of the SR problem's complexity but also validates the necessity of approximation methods in tackling it.

\section{The SR Problem is NP-hard}  \label{sec:srnp}
This paper establishes the NP-hardness of the symbolic regression (SR) problem through a three-step process. 

(1) The first step involves demonstrating that the Degree-Constrained Steiner Arborescence Problem (DCSAP) is NP-hard. DCSAP is a variant of the degree-constrained Steiner tree problem (DCSTP), which is already known to be NP-complete, thereby extending this complexity classification to DCSAP. 

(2) The second step connects the SR problem to DCSAP by illustrating that solving the SR problem is akin to finding a DCSAP in the symbol graph (introduced in Section~\ref{sec:graph}). This relationship is visually represented in Figure~\ref{fig:steinergp}. 

(3) The final step combines these insights to assert that since DCSAP is NP-hard, and the SR problem is equivalent to DCSAP, it logically follows that the SR problem is also NP-hard. 

\subsection{DCSAP is NP-hard}
The degree-constrained Steiner Arborescence problem (DCSAP) is a variant of the degree-constrained Steiner tree problem (DCSTP) that refers to finding a minimum-weight tree in an undirected graph \citep{voss1992problems,bauer1995degree,liu2003degree}. Let the undirected graph be $G=(V,E,w)$, where $V=\{ v_1,...,v_n\}$ is the vertex set, $E=\{e_{ij},... \}$ is the edge set, and $w$ is a weight function that maps an edge $e_{ij}$ into a real number, i.e., $w: E\to \mathbb{R}$. Given a vertex subset $S\subseteq V$ (called \textbf{terminals}), the DCSTP requires a minimal weighted tree $G_s=(V_s,E_s,w)$ with $S$ and some additional vertices in $G$ ($W(G_s)=\sum_{e\in E_s}{w(e)}$). For example, Figure~\ref{DCSTP} shows an undirected graph $G$, where the red circles are terminals. The DCSTP of $G$ connects in black lines, whose total weight is minimum among all possible choices of trees.
\begin{figure}[htp]
\centering
\subfigure[DCSTP]{
\centering
\includegraphics[width=0.39\linewidth]{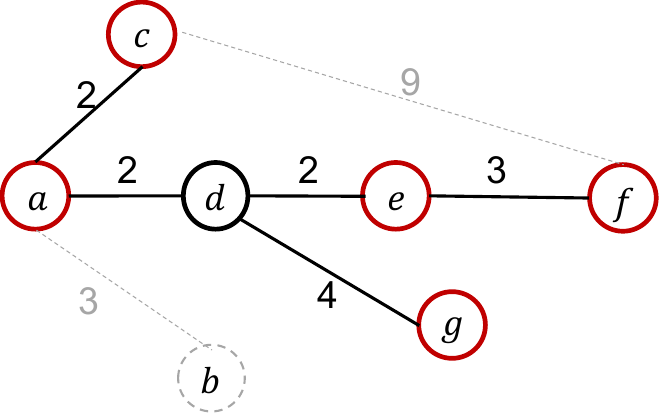}
\label{DCSTP}
}
\subfigure[DCSAP]{
\centering
\includegraphics[width=0.39\linewidth]{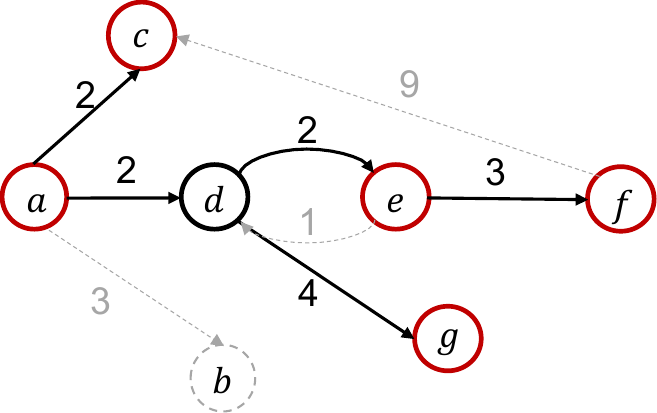}
\label{DCSAP}
}
\caption{An example of the DCSTP and DCSAP. (a) shows an undirected graph $G=(V,E,w)$ with the terminals $S$=\{$V_a,V_c,V_e,V_f,V_g$\}. The DCSTP is connected in black lines with a weight of 13. (b) shows a directed graph with the same terminals and the root vertex $r=V_a$.}
\label{fig:st}
\end{figure}

The DCSAP problem is the extension of the DCSTP problem, with the constraint on a directed graph. For example, Figure~\ref{DCSAP} shows a directed graph with a root vertex $r=V_a$. The tree connected by black lines is the DCSAP.
\begin{mylem}\label{dcsmtnpc}
The DCSAP problem is NP-hard.
\end{mylem}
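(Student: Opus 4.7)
The plan is a polynomial-time reduction from the Degree-Constrained Steiner Tree Problem (DCSTP), which the authors note is already NP-complete, to DCSAP. Given any DCSTP instance $(G, S, b, W)$ where $G=(V,E,w)$ is undirected, $S\subseteq V$ is the terminal set, $b:V\to\mathbb{N}$ gives per-vertex degree bounds, and $W$ is a weight budget, I would construct a directed instance $G'=(V,E',w')$ by replacing each edge $\{u,v\}\in E$ with the two oppositely-directed arcs $(u,v)$ and $(v,u)$, both of weight $w(\{u,v\})$. I would then fix an arbitrary terminal $r\in S$ to serve as the root of the arborescence and retain the same terminal set, degree bounds, and weight budget. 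This construction is clearly computable in polynomial time.

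Next, I would establish correctness by exhibiting a weight-preserving bijection between feasible solutions of the two instances. In the forward direction, any feasible DCSTP Steiner tree $T\subseteq G$ can be oriented into an arborescence $T'$ rooted at $r$ by directing every edge of $T$ away from $r$ along the unique tree path; $T'$ lives in $G'$, has the same total weight as $T$, and preserves the underlying undirected degree of every vertex. In the reverse direction, any DCSAP arborescence $T'$ in $G'$ yields an undirected tree $T$ in $G$ by simply forgetting arc orientations, again with the same weight and the same vertex degrees. Hence the DCSTP instance admits a solution of weight $\le W$ if and only if the DCSAP instance does, and the NP-hardness of DCSTP transfers to DCSAP.

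The main obstacle I anticipate is choosing a faithful translation of the degree constraint when moving from undirected to directed graphs, since in the directed setting the bound could reasonably be interpreted on the out-degree, the in-degree, or the sum of the two. I would define the DCSAP constraint as a bound on in-degree plus out-degree, so that it coincides exactly with the undirected degree of each vertex in the underlying tree and the reduction is constraint-preserving on the nose. A secondary subtlety is that DCSTP has no distinguished root while DCSAP does: this is handled either by iterating the reduction over all $|S|$ possible terminal roots (polynomial blow-up) or by observing that for any tree containing a terminal, orienting from any fixed terminal yields a valid arborescence of the same weight, so fixing one root suffices. Once these two points are pinned down, the equivalence of feasibility and weight is immediate and the lemma follows.
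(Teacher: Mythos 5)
Your reduction is exactly the one the paper uses: replace each undirected edge of the DCSTP instance with two oppositely directed arcs of the same weight and carry the terminals, degree bounds, and budget over, so NP-hardness of DCSTP transfers to DCSAP. Your treatment of the degree-bound interpretation and the root choice just makes explicit details the paper's brief proof leaves implicit; the approach is the same.
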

\begin{proof} 
The DCSTP is NP-complete \citep{bauer1995degree,voss1990steiner}. Moreover, it can be easily known that any instance of the DCSTP in an undirected graph can be transformed into a DCSAP in a directed graph by replacing each edge $e(i,j)$ by two oppositely directed arcs $e\langle i,j\rangle$ and $e\langle j,i\rangle$ and then associating the weight of the edge $e(i,j)$ with them. So, the DCSAP is NP-hard. 
\end{proof} 
The DCSAP is an optimization problem. Its decision version (DCSAP-Dec) refers to deciding whether a DCSAP exists with a specific weight $\epsilon$ in the directed graph. It is defined formally as follows. 
\begin{myDef}[DCSAP-Dec]\label{DCSMT-Dec}
Given a directed graph $G=(V,E,w)$, DCSAP-Dec outputs a directed graph $G_s$ if and only if:
\begin{equation}
\exists \ e\in E_s \wedge d_i \leq k_i: W(G_s) =\sum_{e\in G_s} w(e) =\epsilon,
\end{equation}
where $d_i$ is the degree of the vertex $v_i$, $k_i$ is the degree constraint of $v_i$.
\end{myDef}
\begin{mylem}
The DCSAP-Dec is NP-complete.
\end{mylem}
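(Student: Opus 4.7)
The plan is the standard two-part demonstration of NP-completeness: show that DCSAP-Dec lies in NP, and then reduce a known NP-complete problem to it. Membership in NP will follow from a direct verifier, and NP-hardness will follow from essentially the same arc-doubling reduction that underlies Lemma~\ref{dcsmtnpc}, applied now to the classical NP-complete decision version of DCSTP.

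For NP membership, I would take the certificate to be the subgraph $G_s = (V_s, E_s, w)$ itself and describe a polynomial-time verifier that checks (i) $G_s$ is an arborescence rooted at the designated root $r$, i.e., every non-root vertex has in-degree exactly one and there is a directed path from $r$ to every vertex of $V_s$; (ii) $S \subseteq V_s$; (iii) every vertex satisfies the degree bound $d_i \le k_i$; and (iv) $\sum_{e \in E_s} w(e) = \epsilon$. Each of these four checks runs in time polynomial in $|V| + |E|$. For NP-hardness, I would adapt the reduction used in Lemma~\ref{dcsmtnpc}: given an instance of the decision version of DCSTP on an undirected graph $G$ with terminals $S$, degree bounds $k_i$, and target weight $\epsilon$, construct a directed instance $G'$ by replacing each undirected edge with two oppositely oriented arcs of the same weight, designate an arbitrary terminal as the root $r$, and carry over $\epsilon$ together with degree bounds adjusted to account for the single in-edge that each non-root vertex of an arborescence necessarily consumes. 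A DCSAP of weight $\epsilon$ in $G'$ projects to a DCSTP of weight $\epsilon$ in $G$ and conversely, so the two decision problems are polynomial-time equivalent.

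The main obstacle I anticipate is the bookkeeping for degree constraints when translating between the undirected and directed settings, since one must decide whether $d_i$ in the directed instance counts in-degree, out-degree, or their sum, and then choose the directed bounds $k_i$ so that feasible solutions correspond faithfully in both directions. Once this accounting is pinned down, both directions of the reduction go through without further difficulty, and combining the polynomial-time verifier with the reduction from DCSTP-Dec yields the claim that DCSAP-Dec is NP-complete.
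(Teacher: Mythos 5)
Your proposal is correct, but your NP-hardness argument takes a genuinely different route from the paper's. For membership in NP you and the paper agree in spirit, though your verifier is more complete: the paper only says one can check $W(G_s)=\epsilon$ in polynomial time, whereas you also check the arborescence structure, the terminals, and the degree bounds, all of which are really needed for the certificate to witness a yes-instance. For hardness, the paper does \emph{not} reduce from DCSTP-Dec; it argues by contradiction via self-reducibility: assuming DCSAP-Dec were solvable in polynomial time (on unit-weight instances), binary search over the possible weights would solve the optimization problem DCSAP in polynomial time, contradicting Lemma~\ref{dcsmtnpc}. That argument is a Turing-reduction from the optimization problem to its decision version, and as written it implicitly assumes unit weights and conflates ``NP-hard'' with ``not polynomial-time solvable.'' Your approach instead gives a direct many-one reduction from the known NP-complete decision version of DCSTP, reusing the arc-doubling construction of Lemma~\ref{dcsmtnpc}, which is the standard notion of reduction required for NP-completeness and sidesteps those subtleties; its only cost is the bookkeeping you already flag (how directed degree bounds mirror the undirected ones, with each non-root vertex consuming one in-arc), plus one small point worth making explicit: the paper's Definition~\ref{DCSMT-Dec} asks for weight exactly $\epsilon$, so you should either phrase both decision problems with ``$\le\epsilon$'' or note that the exact-weight variant is handled by ranging over the (polynomially many, for integral weights) candidate values. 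With that pinned down, your reduction is clean and arguably more rigorous than the paper's bisection argument.
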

\begin{proof}
Based on~\cite{papadimitriou2003computational}, to prove the DCSAP-Dec is NP-complete, we should prove (1) the DCSAP-Dec is in NP, and (2) the DCSAP-Dec is NP-hard.

(1) Demonstrating that DCSAP-Dec is in NP: This is achieved by verifying that a given solution $W(DCSAP)$ equals $\epsilon$ within polynomial time. The ability to perform this verification efficiently places DCSAP-Dec within the NP category.

(2) Proving that DCSAP-Dec is NP-hard: We prove the DCSAP-Dec is NP-hard since it can be transformed into the DCSAP via the bisection method. Let us assume that finding a solution for DCSAP-Dec $W(G_s)=\epsilon$ takes polynomial time in a directed graph (each edge's weight is $1$, and the number of vertex is $n$). Then we can use the bisection method to determine the minimum weight of DCSAP. The bisection method takes $O(log_2^n)$. Consequently, finding a DCSAP solution with the minimum weight would also require polynomial time. However, this contradicts the established understanding from Lemma~\ref{dcsmtnpc} that DCSAP is NP-hard. Therefore, our initial assumption must be incorrect, and DCSAP-Dec is also NP-hard.

The combination of these two points --- that DCSAP-Dec is in NP and is NP-hard --- leads to the conclusion that DCSAP-Dec is NP-complete. This conclusion is crucial as it underscores the computational complexity involved in solving the DCSAP-Dec and by extension, similar problems in computational theory.
\end{proof} 

\subsection{Symbol Graph} \label{sec:graph}
To describe the mathematical expression space, we construct a \textbf{symbol graph} $G=(V,E,W) $ as shown in Figure~\ref{fig:steinergp}. The symbol graph is a layered graph that includes three types of layers: 1) the top layer, 2) the leaf layer, and 3) the function layer. 
1) The top layer only contains a '$\diamond$' vertex as the root vertex. The '$\diamond$' vertex is the cumulative sum operator '$\sum$', which can linearly combine mathematical expressions from the second layer (function layer) and the leaf layer. For example, in Figure~\ref{fig:steinergp}, the '$\diamond$' vertex can represent $"sin(x_1\times x_2)+sin(x_1^2)+x_2+log(x_1-x_2)"$.
2) The leaf layer contains constant vertices $V_c$ and variable vertices $V_x$. Each constant vertex shows a constant, such as $1.2$, $e$, and $\pi$, and each variable vertex represents a variable. Note that $k$ variable vertices in $V_x$ may show the same variable. 
3) The function layer has $l$ levels. Each level consists of operator vertices $V_{op}$ that represent the mathmatical functions , such as '$+$', '$\times$', and '$sin$'. Moreover, on the same level, $p$ operator vertices in $V_{op}$ may show the same function. Each vertex $op$ in $V_{op}$ can connect all vertices of its next level and the leaf layer. 
So, the vertex set in the symbol graph is $V=\lbrace \diamond, V_{op}, V_x, V_c \rbrace$. 

The edge set in the symbol graph is $E=\lbrace E_c, E_x, E_{op}\rbrace$, where $E_c$ is the edge set from the operator vertex to a constant vertex, $E_x$ is the edge set from the operator vertex to a variable vertex, $E_{op}$ is the edge set from an operator vertex to an operator vertex. When connecting vertices to a tree, the degree of root '$\diamond$' vertex is lower than or equal to $s$, where $s$ is the number of vertices in the second layer and leaf layer; the out-degree of a leaf vertex is equal to 0; the out-degrees of a unary operator, binary operator, and ternary operator vertex are 1, 2 and 3 respectively. 

The weight $w$ on an edge $e \in E$ represents a vertex's output after excluding its parameters' influence. The following steps compute the weight $w$.

$\bullet$ If $e \in  \lbrace E_c, E_x \rbrace$, $w$ is the value of the constant or the variable vertex (Figure~\ref{weighta}). 

$\bullet$ If $e \in E_{op}$, $w$ is a set obtained by the following substeps. 1) $w$ is initialized to outputs of all possible mathematical expressions that the operator vertex $v$ shows. For example, in Figure~\ref{weightb}, $w = \lbrace a, b, c, a\times b, a\times c, b\times c \rbrace$. 2) $w$ subtracts the sum of weights that are parameters of the operator vertex $v$, such as "$f(a,b)-(a+b)$" and "$f(a)-a$". For example, in Figure~\ref{weightc}, $w = \lbrace a, b, c, a\times b-(a+b), a\times c-(a+c), b\times c-(b+c) \rbrace$.   

So, the sum of weights on edges in a mathematical expression $f$ equals its output, i.e., $W=\sum_{e\in f}{w(e)}=f(x)$. For example, for $"f=c\times a"$ in Figure~\ref{weightc}, $"c\times a = c\times a -(a+c) + a +c" $, where $"c\times a -(a+c)"$, '$a$' and '$c$' are weights in the '$f$'. 

\begin{figure}[htp]
\centering
\subfigure[]{
\centering
\includegraphics[width=0.13\linewidth]{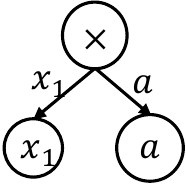}
\label{weighta}
}
\subfigure[]{
\centering
\includegraphics[width=0.22\linewidth]{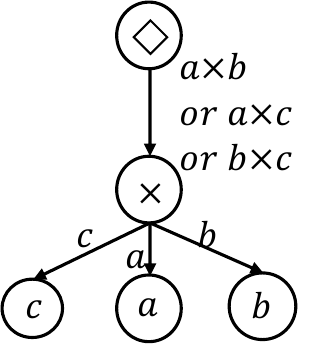}
 \label{weightb}
}
\subfigure[]{
\centering
\includegraphics[width=0.35\linewidth]{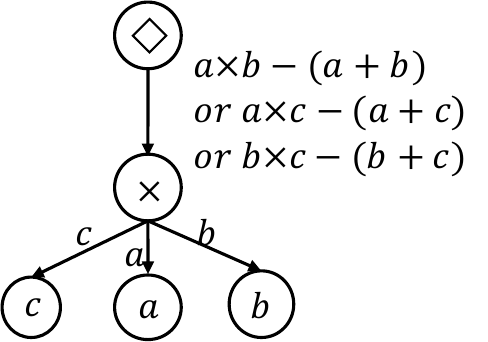}
 \label{weightc}
}
\caption{The examples of computing weights. ($a$) shows weights on edge $E_c$ and $E_x$; ($b$) and ($c$) show two substeps of calculating weights on edge $E_{op}$.}
\label{fig:weight}
\end{figure}

The symbol graph contains any mathematical expression by connecting a tree from the root '$\diamond$' vertex to vertexes in the leaf layer. For example, in Figure~\ref{fig:steinergp}, the tree connected with the green line represents the mathematical expression "$sin(x_1^2)+x_2$". Moreover, the symbol graph is the mathematical expression space $\Omega$. Finding a mathematical expression with an output of $Y$ can be seen as finding a Steiner Arborescence with the degree constraint and specified weights in the graph, which is the same as the DCSAP-Dec.
Therefore, the \textbf{decision version of SR problem (SR-Dec) equals DCSAP-Dec} in the symbol graph when setting the '$\diamond$' vertex as the root vertex, taking the root vertex and one variable vertex as \textbf{terminals}. 

Figure~\ref{fig:sr+st} shows an example of finding a DCSAP whose output is $Y$ of the given dataset after setting the three vertexes ('$\diamond$', '$x_1$' and '$x_2$') as terminals. The DCSAP is the mathematical expression $"sin(x_1\times x_2)"$ that is equal to the sum of weights in the tree, i.e.$"(sin(x_1 \times x_2) -x_1\times x_2)+(x_1\times x_2-x_1-x_2)+x_1+x_2"$.  
\begin{figure}[htp]
    \centering
    \includegraphics[scale=0.55]{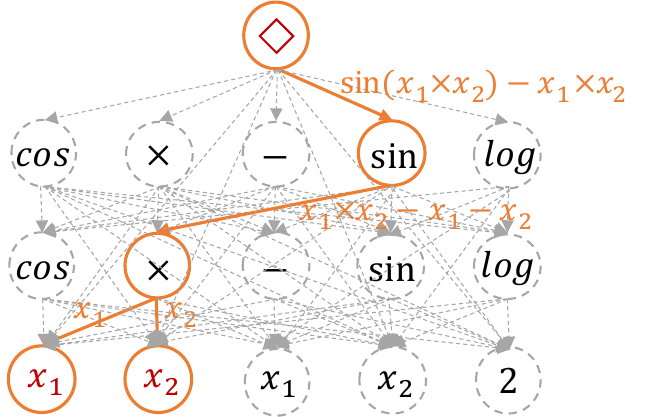}
    \caption{An example symbol graph G for the SR problem. The tree connected with orange lines is the DCSAP in G.}
    \label{fig:sr+st} 
\end{figure}

\subsection{The SR problem is NP-hard}
Here, we prove that the SR problem is NP-hard by showing that its decision version (SR-Dec) is NP-complete first.
The SR-Dec problem aims to find a function $f(X)$ in the mathematical expressions space $\Omega$ such that its loss $l$ is smaller than a chosen threshold $\epsilon$ \citep{virgolin2022symbolic}, i.e.,
\begin{equation}\label{equ2}
\exists f\in \Omega:l(Y,f(X)) \leq \epsilon.
\end{equation}
When $\epsilon$ equals 0, the SR-Dec is to find a function $f(X)$ with an output of $Y$.
\begin{myTheo}\label{srdecnpc}
The SR-Dec problem is NP-complete.
\end{myTheo}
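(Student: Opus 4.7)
The plan is to establish NP-completeness of SR-Dec in the standard two-step manner: first showing SR-Dec lies in NP, then reducing the NP-complete DCSAP-Dec to SR-Dec. The reduction leverages the symbol-graph machinery of Section~\ref{sec:graph}, where the paper has already argued informally that searching the mathematical expression space $\Omega$ coincides with searching for a suitable arborescence rooted at $\diamond$ in the symbol graph.

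For membership in NP, I would take the candidate expression $f$ itself as a polynomially-sized certificate. A verifier evaluates $f$ on each data point of $X$, computes $l(Y, f(X))$, and checks whether $l(Y, f(X)) \le \epsilon$. Each step is polynomial in the dataset size and the description length of $f$ (which is bounded by the number of operator and leaf vertices used), so the verification runs in polynomial time.

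For NP-hardness, I would construct a polynomial-time reduction from DCSAP-Dec. Given a DCSAP-Dec instance with directed graph $G$, designated root, terminals, degree constraints, and target weight $\epsilon$, I would realize $G$ as a symbol graph and supply a dataset $(X, Y)$ so that three conditions hold: (i) every degree-constrained arborescence rooted at $\diamond$ that spans the designated terminals corresponds bijectively to a mathematical expression $f \in \Omega$ obeying the arity restrictions (out-degrees $1$, $2$, or $3$ at unary, binary, or ternary operator vertices, and out-degree $0$ at leaves); (ii) the total weight $W(G_s) = \sum_{e \in G_s} w(e)$ equals $f(X)$ by the weight construction of Section~\ref{sec:graph}; (iii) setting $Y := \epsilon$ and choosing the loss $l(Y, f(X)) = |Y - f(X)|$ with threshold $0$ makes SR-Dec equivalent to asking whether an arborescence of total weight exactly $\epsilon$ exists. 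Combined with membership in NP, this yields NP-completeness.

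The main obstacle will be making the weight correspondence in (ii) fully rigorous. The construction in Section~\ref{sec:graph} subtracts parameter sums at operator edges (for instance, the edge feeding a binary $\times$ with operands $a,b$ carries weight $ab-(a+b)$), and I need to verify that these local subtractions telescope correctly so that summing all edge weights along any rooted arborescence yields exactly $f(X)$ for the encoded expression $f$. Once this telescoping identity is checked for each operator and extended by induction on tree depth, the reduction is polynomial-time computable (the symbol graph has size polynomial in the DCSAP instance), and the NP-completeness of SR-Dec follows from the previously established NP-hardness of DCSAP (Lemma~\ref{dcsmtnpc}) together with its decision-version corollary.
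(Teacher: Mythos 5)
Your proposal follows essentially the same route as the paper: membership in NP by polynomial-time evaluation of the loss of a candidate expression, and NP-hardness by identifying SR-Dec with DCSAP-Dec on the symbol graph (with $\epsilon=0$, $W(G_s)=f(X)$, and the telescoping edge-weight construction of Section~\ref{sec:graph}), invoking the NP-completeness of DCSAP-Dec established from Lemma~\ref{dcsmtnpc}. If anything, you state the reduction direction (DCSAP-Dec to SR-Dec, via a polynomial-size symbol-graph construction) and the telescoping identity to be checked more explicitly than the paper does, but the underlying argument is the same.
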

\begin{proof}
Based on~\cite{papadimitriou2003computational}, to prove the SR-Dec is NP-complete, we should prove (1) the SR-Dec is in NP, and (2) the SR-Dec is NP-hard.

(1) Demonstrating that SR-Dec is in NP: This is achieved by verifying that a given solution $f(X)$'s loss $l\leq \epsilon$ within polynomial time. The ability to perform this verification efficiently places SR-Dec within the NP category.

(2) Proving that SR-Dec is NP-hard: we show that the SR-Dec problem is NP-hard since the SR-Dec problem equals the DCSAP-Dec problem. According to Section \ref{sec:graph}, the following holds:
\begin{align}
&\exists f\in \Omega:l(Y,f(X)) \leq \epsilon ?\\
(Setting \ \epsilon =0) \Rightarrow &\exists f\in \Omega:l(Y,f(X)) \leq 0 ?\\
\Rightarrow &\exists f\in \Omega:f(X)= Y ?\\
(W(G_s)=f(X))\Rightarrow &\exists e\in E_s \wedge d_i \leq k_i:W(G_s)= Y ?
\end{align}
Constructing a limited-level symbol graph to transform DCSAP-Dec into SR-Dec takes polynomial time. Since DCSAP-Dec is NP-complete, SR-Dec is NP-hard.

The combination of these two points --- that SR-Dec is in NP and is NP-hard --- leads to the conclusion that SR-Dec is NP-complete. This conclusion is crucial as it underscores the computational complexity involved in solving the SR-Dec and by extension, similar problems in computational theory. 
\end{proof}
According to~\cite{virgolin2022symbolic}, it can be concluded that since the SR-Dec problem is NP-complete, then the SR problem is NP-hard.

\section{Conclusion}
In this paper, we design a novel symbol graph that can describe the mathematical space of the symbolic regression (SR). Based on the symbol graph, we offer a robust proof demonstrating that the SR problem is equivalent to the degree-constrained Steiner Arborescence problem (DCSAP). The complexity of DCSAP, which is proven to be NP-hard in Lemma~\ref{dcsmtnpc}, directly implies the SR problem is also NP-hard. Our proof comprehensively covers all mathematical expressions, going beyond the simpler expressions considered in the previous proof.

\bibliographystyle{unsrtnat}
\bibliography{ref.bib}

\end{document}